\documentclass[12pt]{article}

\usepackage[english]{babel}
\usepackage[T1]{fontenc}
\usepackage[utf8]{inputenc}
\usepackage{graphicx}
\usepackage{amsmath,amsfonts,amssymb,amsbsy,amsthm, subfigure,mathabx}
\usepackage[a4paper]{geometry}
\usepackage{tikz,pgfplots}
\usepackage{bm}
\usepackage{comment}

\bibliographystyle{plain}

\makeatletter
\newif\if@restonecol
\makeatother

\usepackage[algo2e,ruled,lined,titlenumbered,commentsnumbered]{algorithm2e}

\theoremstyle{remark}

\theoremstyle{plain}
\newtheorem{thm}{Theorem}
\newtheorem{lem}{Lemma}

\newcommand{\un}[1]{\ensuremath{\underline{#1}}}
\newcommand{\uu}[1]{\ensuremath{\un{\un{#1}}}}
\newcommand{\dime}{d}

\newcommand{\sig}{\ensuremath{\uu{\sigma}}}
\newcommand{\sigH}{\ensuremath{\sig_H}}

\newcommand\strain[1]{\uu{\varepsilon}\left(#1\right)}
\newcommand{\dep}{\ensuremath{\un{u}}}
\newcommand{\depv}{\ensuremath{\un{v}}}
\newcommand{\depH}{\ensuremath{\dep_H}}

\newcommand{\depvH}{\ensuremath{\depv_H}}

\newcommand{\efdep}{\ensuremath{\mathbf{u}}}
\newcommand{\lam}{\ensuremath{\boldsymbol{\lambda}}}
\newcommand{\Lam}{\ensuremath{\boldsymbol{\Lambda}}}

\newcommand{\eft}{\ensuremath{\mathbf{t}}}
\newcommand{\kerK}{\ensuremath{\mathbf{R}}}

\newcommand\shapef{\varphi}
\newcommand\shapev{\un{\boldsymbol{\shapef}}_H}
\newcommand{\stiff}{\ensuremath{\mathbf{K}}}
\newcommand{\schur}{\ensuremath{\mathbf{S}}}
\newcommand{\force}{\ensuremath{\mathbf{f}}}

\newcommand{\pa}{\ensuremath{\mathbf{A}}}
\newcommand{\da}{\ensuremath{\mathbf{B}}}
\newcommand{\tpa}{\ensuremath{\mathbf{\tilde{A}}}}
\newcommand{\tda}{\ensuremath{\mathbf{\tilde{B}}}}
\newcommand{\efDep}{\ensuremath{\mathbf{U}}}
\newcommand{\res}{\ensuremath{\mathbf{r}}}
\newcommand{\bz}{\ensuremath{\mathbf{z}}}
\newcommand{\bw}{\ensuremath{\mathbf{w}}}
\newcommand{\bp}{\ensuremath{\mathbf{p}}}

\newcommand{\tlam}{\ensuremath{\boldsymbol{\tilde{\lambda}}}}
\newcommand{\tdep}{\ensuremath{\mathbf{\tilde{u}}}}
\newcommand{\pdep}{\ensuremath{\dot{\mathbf{u}}}}
\newcommand{\plam}{\ensuremath{\boldsymbol{\dot{\lambda}}}}

\newcommand{\matid}{\ensuremath{\mathbf{{I}}}}

\newcommand{\proj}{\ensuremath{\mathbf{{P}}}}

\newcommand{\domain}{\ensuremath{\Omega}}
\newcommand{\s}{\ensuremath{^{(s)}}}
\newcommand{\sT}{\ensuremath{^{(s)^T}}}

\newcommand\trace{\operatorname{tr}}

\newcommand\hooke{\mathbb{H}}
\newcommand\KA{\ensuremath{\mathrm{KA}}}
\newcommand\KAo{\ensuremath{\KA^0}}
\newcommand\KAH{\ensuremath{\KA_H}}

\newcommand\SA{\ensuremath{\mathrm{SA}}}

\newcommand{\broken}{\ensuremath{\KA(\bigcup\domain\s)}}

\newcommand\ecr[2]{\ensuremath{\mathrm{e_{CR_{#2}}(#1)}}}
\newcommand\ecrc[2]{\ensuremath{\mathrm{e^2_{CR_{#2}}(#1)}}}
\newcommand\enernorm[2]{\|#1\|_{\hooke^{-1},#2}}

\title{A strict error bound with separated contributions of the discretization and of the iterative solver in non-overlapping domain decomposition methods}
\author{Valentine Rey\footnote{valentine.rey@lmt.ens-cachan.fr}, Christian Rey\footnote{christian.rey@lmt.ens-cachan.fr}, Pierre Gosselet{gosselet@lmt.ens-cachan.fr} \\
LMT-Cachan / ENS-Cachan, CNRS, UPMC, Pres UniverSud Paris \\
61, avenue du pr\'esident Wilson, 94235 Cachan, France
}

\begin{document}
\maketitle
\begin{abstract}
This paper deals with the estimation of the distance between the solution of a static linear mechanic problem and its approximation by the finite element method solved with a non-overlapping domain decomposition method (FETI or BDD). We propose a new strict upper bound of the error which separates the contribution of the iterative solver and the contribution of the discretization. Numerical assessments show that the bound is sharp and enables us to define an objective stopping criterion for the iterative solver.

{\bf Keywords:} Verification; domain decomposition methods; FETI; BDD; convergence criterion.
\end{abstract}

\section{Introduction}
Developing robust numerical methods to solve systems of partial differential equations has become a major challenge in engineering. Indeed industrialists wish to adopt virtual testing in order to replace expensive experimental studies up to the certification of their structures. The massive use of virtual prototyping relies on the capacity to warrant the quality of the numerical solutions. In the context of the Finite Element Method (FEM), \textit{a posteriori} error estimators permit to estimate the distance between the unknown exact solution and the numerical solution. 
Initial methods \cite{Bab78a,Lad83,Zie87} evaluated globally the effect of the spatial discretization for linear problems, they have been extended to non-linear and time-dependent problems, and to the estimation of the error on quantities of interest. Another prerequisite for virtual testing is the ability to conduct large scale computations, because reliable models involve lots of degrees of freedom. Non-overlapping Domain Decomposition Methods (DDM) offer a favorable framework for fast iterative solvers adapted to modern clusters \cite{Gos06}.

Most upper bounds for the error which do not involve constants rely on the computation of admissible stress and displacement fields. In a recent paper \cite{Par10}, the classical methods to construct statically admissible fields were extended to the framework of substructured problems. The estimator which ensues is fully parallel and totally integrated to classical DD solvers BDD \cite{Let94} and FETI \cite{Far94bis}. It provides a guaranteed upper bound whether the iterative solver of the interface problem has converged or not; unfortunately, it is not able to separate the different sources of error, namely the error due to the discretization and the error due to the lack of convergence at the interface. 
In \cite{Par10}, a Gamma-shape structure clamped on its basis and sollicitated in traction and shear on its upper-right side was considered. The structure was split into $8$ subdomains (Figure~\ref{fig:courbe_La}) and the problem was solved with classical DD solver. The error estimation provided by the estimator proposed in \cite{Par10} is computed at each iteration of the DD solver (Figure~\ref{fig:courbe_Lb}).
We clearly observe  L-shaped curves highlighting the fast convergence rate of the estimator $e_{CR}^{DDM}$ with respect to the domain decomposition residual. 
After a few iterations, the error due to the non-verification of the continuity and balance on the interface is insignificant compared to the contribution of the discretization error (which can be visualized by the estimator of the sequential problem). On that example, we observe that, whatever the substructuring, after 4 iterations there is no improvement of the approximation, while classical stopping criteria based on the decrease of a norm of the interface residual would imply 5 times more iterations.
\begin{figure}[ht]\hfill
\subfigure[Domain decomposition and loading]{\centering
 \includegraphics[width=0.25\textwidth]{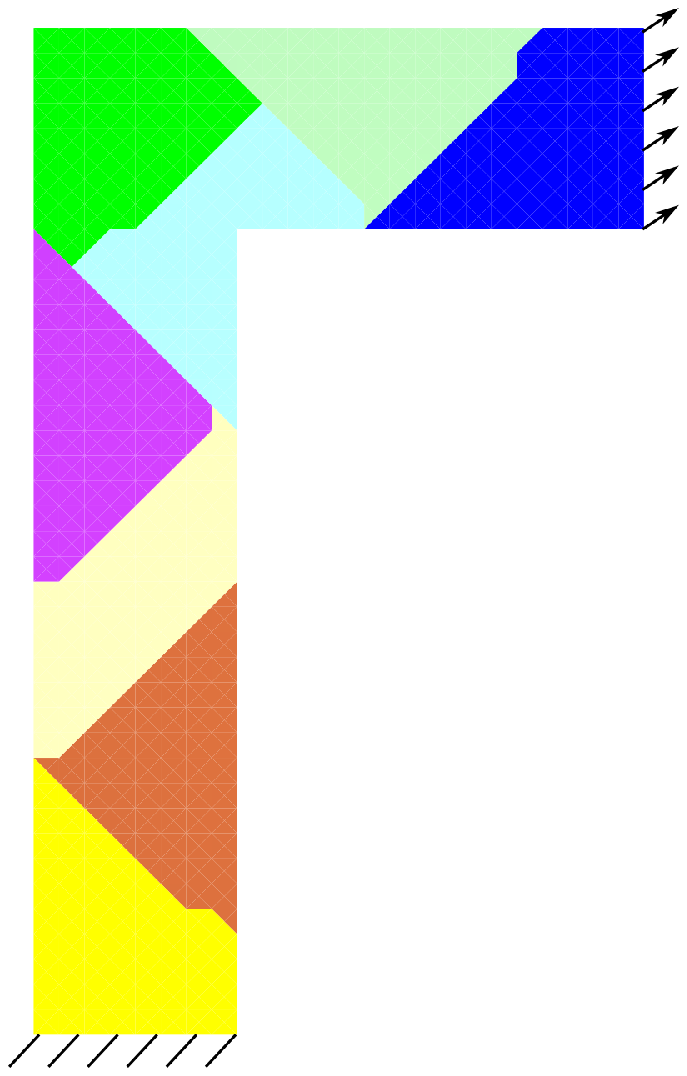}\label{fig:courbe_La}}
\subfigure[Convergence of error estimator vs DD residual]{
\centering \includegraphics[width=.7\textwidth]{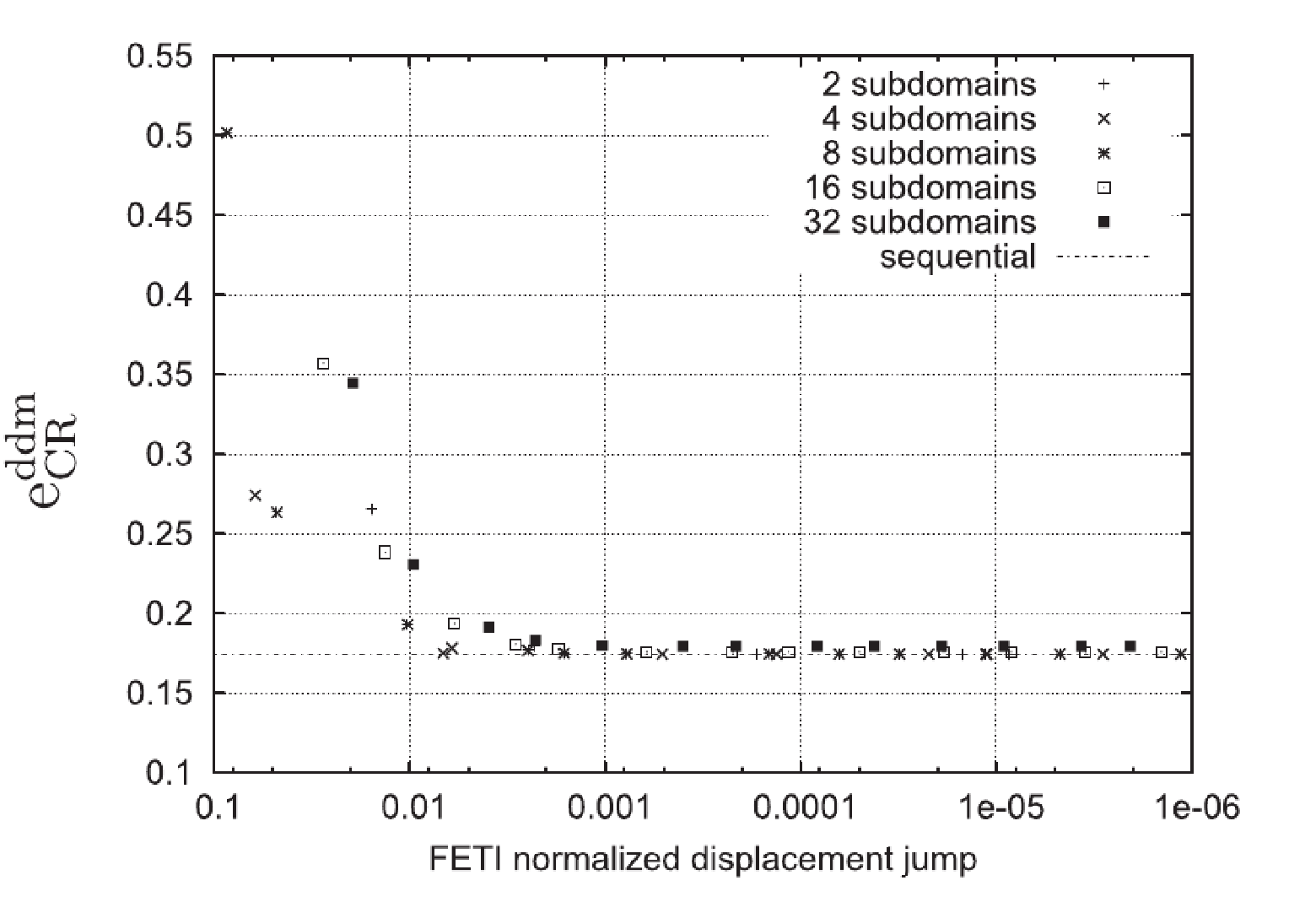}\label{fig:courbe_Lb}}
\caption{Error estimator from \cite{Par10}}
\end{figure}

Then in order to avoid oversolving, we wish to distinguish the contributions of the discretization and of the iterative solver to the estimation of the error. The non-convergence of the solver will be referred to as the algebraic error and no other sources of errors (rounding, representation of the loadings) will be considered.


Various articles have dealt with the separation of the contributions to the error and with the definition of new stopping criteria. In~\cite{Ari04} the author insists on the use of the energy norm for the measurement of the residual instead of a classical Euclidean norm, in order to link the iterative methods to the properties of the approximated problem. In the framework of multigrid methods, an adaptive procedure to define both the refinement and the stopping criterion is developed in~\cite{Bec95}. However, this technique demands the computation of constants since it is based on \textit{a priori} estimates. The error in constitutive relation does not require the calculation of such constants and provides guaranteed upper bounds, it was applied to various problems that introduce other sources of error. In~\cite{Gal96,Gal07}, the authors define a time error indicator to separate the part of the error due to the time discretization from the part due to the space discretization, and they use it to optimize the time steps. This work is extended in~\cite{Lad98qua} in which an indicator of the effect of non-linear iterations complete the total error estimation. Finally, in the case of contact problems, the separation of the discretization error and of the algebraic error is performed for problems solved with the fixed-point method \cite{Lou03bis} and with a Neumann-Dirichlet algorithm~\cite{Gal10}. Nevertheless, the computation of all those indicators requires the resolution of auxiliary problems which considerably increase the cost of the estimation (indeed, one has to compute statically admissible fields for each problem, which can be a costly step).
For the finite volume method, the separation of the different sources of error and the definition of a new stopping criterion is exposed in~\cite{Jir10} for second-order elliptic problems. 
This question of balancing the sources of error has also been addressed for error estimation on quantity of interest: in~\cite{Mei09}, a goal-oriented procedure to solve a problem with the multigrid method with a level of precision specified by the user is proposed; a similar approach is presented in~\cite{Choi04} for the bound method~\cite{Par97,Par98}. 

In this paper, we present a new guaranteed upper bound that separates the algebraic error (represented by a well chosen norm of the residual) from the discretization error of the subdomains in the case of a linear problem solved with a classical DD solver (BDD or FETI). 
This separation enables us to define a new stopping criterion for the iterative solver. The method relies on the parallel procedures to build admissible fields proposed in \cite{Par10} but compared to the estimator of that paper, the procedures are called much less often.

The paper is organized as follow. In Section~\ref{sec:rappels} we define the reference problem, we recall the principle of the error in constitutive relation and we detail the error estimation in the substructured context by highlighting the fields created in the FETI and BDD solvers. In Section~\ref{sec:theoreme} we prove the new guaranteed upper bound separating the algebraic error and the discretization error on each subdomain and we explain how it leads us to define a new stopping criterion for the iterative solver. In Section~\ref{sec:assessment}, we apply our new upper bound to two 2D mechanical problems. We compare the behavior of the new upper bound with the one introduced in \cite{Par10} and study the evolution of each term of the inequality during the iterations. Section~\ref{sec:conclusion} concludes the paper.

\section{A posteriori error estimation in substructured context}
\label{sec:rappels}
\subsection{Reference problem}
Let $\mathbb{R}^\dime$ represent the physical space
($d$ is the dimension of the physical space).
Let us consider the static equilibrium  of a (polyhedral) structure which occupies the open domain $\domain\subset\mathbb{R}^\dime$ and which is subjected to  given body force $\un{f}$ within $\Omega$,  to given
traction force $\un{g}$ on $\partial_g\Omega$  and to given displacement field $\dep_d$ on the  complementary part of the boundary $\partial_u\Omega$ ($\operatorname{meas}(\partial_u\Omega)\neq 0$). We  assume that the structure undergoes  small   perturbations  and  that  the  material   is  linear  elastic, characterized by Hooke's  elasticity tensor $\hooke$.  Let $\dep$ be  the unknown displacement field, $\strain{\dep}$ the symmetric part  of the gradient of $\dep$, $\sig$ the Cauchy stress tensor. Let $\omega$ be an open subset of $\domain$.

We introduce two affine subspaces and one positive form:
\begin{itemize}
\item Affine subspace of kinematic admissible fields (KA-fields)
\begin{equation}\label{eq:KA}
  \KA(\omega)=\left\{ \dep\in \left(\mathtt{H}^1(\omega)\right)^\dime,\ \dep = \dep_d \text{ on }\partial\omega\bigcap\partial_u\domain \right\}
\end{equation}
and we note $\KAo$ the following linear subspace:
\begin{equation}\label{eq:KAo}
  \KAo(\omega)=\left\{ \dep\in \left(\mathtt{H}^1(\omega)\right)^\dime,\ \dep = 0 \text{ on }\partial\omega\setminus\partial_g\domain \right\}
\end{equation}
\item Affine subspace of statically admissible fields (SA-fields)
\begin{multline}\label{eq:SA}
  \SA(\omega)
  =\Bigg\lbrace   \uu{\tau}\in  \left(\mathtt{L}^2(\omega)\right)^{\dime\times \dime}_{\text{sym}}; \
    \forall  \depv \in  \KAo(\omega),\ \\ \int\limits_\omega
  \uu{\tau}:\strain{\depv}    d\domain    =    \int\limits_\omega   \un{f} \cdot\depv    d\domain +
  \int\limits_{\partial_g\domain\bigcap\partial\omega} \un{g}\cdot\depv dS   \Bigg\rbrace
\end{multline}
\item Error in constitutive equation 
\begin{equation}\label{eq:ecr}
  \ecr{\dep,\sig}{\omega}= \enernorm{\sig-\hooke:\strain{\dep}}{\omega}
\end{equation}
where ${\enernorm{\uu{x}}{\omega}}=\displaystyle \sqrt{\int_\omega \left( \uu{x}: {\hooke}^{-1}:\uu{x} \right)d\domain}$
\end{itemize}

The mechanical problem set on $\domain$ can be formulated as:
\begin{equation}\label{eq:refpb}
  \text{Find } \left(\dep_{ex},\sig_{ex}\right)\in\KA(\domain)\times\SA(\domain) \text{ such  that } \ecr{\dep_{ex},\sig_{ex}}{\domain}=0
\end{equation}
The solution to this problem, named ``exact'' solution, exists and is unique.

\subsubsection{Finite element approximation}
Let us consider a tessellation  of ${\domain}$ to which we  associate the finite-dimensional  subspace  $\KAH(\domain)$  of $\KA(\domain)$.  The  classical finite element displacement approximation consists in searching:
\begin{equation}
  \begin{aligned}
    \depH&\in\KAH(\domain)\\
    \sigH&=\hooke:\strain{\depH}   \\
    \int_{\domain}
  \sigH:\strain{\depvH}   d\domain  &=   \int_{\domain} \un{f}\cdot\depvH   d\domain   +
  \int_{\partial_g\domain} \un{g}\cdot\depvH dS,\qquad \forall\depvH\in \KAo_H(\domain)
    \end{aligned}
\end{equation}
Of course the approximation is due to the fact that in most cases $\sigH\notin\SA(\domain)$.

After introducing the matrix $\shapev$ of  shape functions which form a basis of $\KAH(\domain)$ (extended to Dirichlet degrees of freedom)  and the  vector of  nodal unknowns  $\efdep$ so that $\depH=\shapev \efdep$, the  classical finite  element method leads  to the linear
system:
\begin{equation}\label{eq:globalFE}
\begin{pmatrix} \stiff_{rr} & \stiff_{rd} \\ \stiff_{dr} & \stiff_{dd}
\end{pmatrix}\begin{pmatrix}\efdep_r\\ \efdep_d\end{pmatrix} = \begin{pmatrix}\force_r\\\force_d \end{pmatrix} +\begin{pmatrix} 0\\\lam_d \end{pmatrix}
\end{equation}
where $\stiff$ is  the (symmetric positive definite) stiffness  matrix 
and $\force$ is the vector of generalized forces; Subscript $d$ stands for Dirichlet degrees of freedom (where displacements are prescribed) and Subscript $r$ represents the remaining degrees of freedom so that unknowns are $\efdep_r$ and $\lam_d$ where Vector $\lam_d$ represents the nodal reactions:
\begin{equation}\label{eq:lambdad}
\lam_d^T=
    \int_{\domain}
  \sigH:\strain{{\shapev}_d}   d\domain - \int_{\domain} \un{f}\cdot{\shapev}_d  d\domain -
  \int_{\partial_g\domain} \un{g}\cdot{\shapev}_d dS
\end{equation}
where ${\shapev}_d$ is the matrix of shape functions restricted to the Dirichlet nodes and $\un{n}$ the outer normal vector.

\subsubsection{Error estimation}
The estimator we choose is based on the error in constitutive equation. The fundamental relation is the following (Prager-Synge theorm, see for instance \cite{Lad04}):
\begin{equation}\label{eq:erdc}
\begin{split}
&\forall (\hat{\dep},\hat{\sig})\in\KA(\Omega)\times\SA(\Omega),\\ 
&\left\|\strain{\dep_{ex}}-\strain{\hat{\dep}}\right\|_{\hooke,\Omega}^2 + \left\|\sig_{ex}-\hat{\sig}\right\|_{\hooke^{-1},\Omega}^2 =  \ecrc{\hat{\dep},\hat{\sig}}{\Omega}
\end{split}
\end{equation}
We note $\vvvert \depv \vvvert_\domain=\left\|\strain{\depv}\right\|_{\hooke,\Omega} $ the energy norm of the displacement, and we retain the following bound:
\begin{equation}\label{eq:erdc2}
 \vvvert  \dep_{ex} -\hat{\dep}  \vvvert_\domain  \leqslant  \ecr{\hat{\dep},\hat{\sig}}{\Omega}
\end{equation}
For this problem one can choose $\hat{\dep}=\depH\in\KA(\domain)$. The construction of $\hat{\sig}\in\SA(\domain)$ is a more complex problem solved by various approaches \cite{Lad91,Par09,Ple11}.

\subsection{A posteriori error estimation in substructured context}
\subsubsection{Substructured formulation}
 Let us consider a decomposition of domain $\Omega$ in $N_{sd}$ open subsets $\Omega^{(s)}$ such that $\Omega^{(s)}\bigcap \Omega^{(s')}=\emptyset$ for $s\neq s'$ and ${\Omega}=\bigcup_s {\Omega}^{(s)}$. The interface between subdomains $\Gamma^{(s,s')}={\Omega}^{(s)}\bigcap {\Omega}^{(s')}$ is supposed to be regular enough for traces 
of locally admissible fields
to be well defined.
 
The mechanical problem on the substructured configuration writes :
\begin{equation}
  \forall s \left\{
   \begin{aligned}
&\dep^{(s)} \in \KA(\Omega^{(s)})  \\
& \sig^{(s)} \in \SA(\Omega^{(s)}) \\
&  \ecr{\dep^{(s)},\sig^{(s)}}{\Omega^{(s)}}=0
\end{aligned}  \right. \text{ and } \forall (s,s')\left\{
    \begin{aligned}
   & \trace(\dep^{(s)})=\trace(\dep^{(s')}) \text{ on } \Gamma^{(s,s')}\\
 &  \sig^{(s)} \cdot\underline{n}^{(s)} +\sig^{(s')}\cdot \underline{n}^{(s')} =\underline{0} \text{ on } \Gamma^{(s,s')}
    \end{aligned} 
  \right.
\end{equation}
Indeed, the kinematic and static admissibility of each $(\dep^{(s)},\sig^{(s)})$ inside $\Omega^{(s)}$ is not sufficient to be globally admissible. The displacements need to be continuous and the tractions need to be balanced on interfaces.
The set of fields $\dep$ defined on $\domain$ such that $\dep_{|\domain\s}\in\KA(\Omega^{(s)})$ without interface continuity is a broken space which we note \broken.

\subsubsection{Finite element approximation for the substructured problem}
We assume that the tessellation of  ${\domain}$ and the substructuring are conforming. This hypothesis implies that each element only belongs to one subdomain and nodes are matching on the interfaces. Each degree of freedom is either located inside a subdomain (Subscript i) or on its boundary (Subscript b).

Let  $\eft^{(s)}$ be the discrete trace operator, so that $\efdep_b\s=\eft^{(s)}\efdep\s$. Let us introduce the unknown nodal reaction on the interface $\lam\s$, the equilibrium of each subdomain writes:
\begin{equation}\label{eq:efddeq}
\stiff\s \efdep\s = \force\s + {\eft\s}^T \lam\s
\end{equation}
Let $(\pa\s)$ and $(\da\s)$ be the primal and dual assembly operator so that the discrete counterpart of the interface admissibility equations is:
\begin{equation}\label{eq:efintadmiss}
\left\{\begin{aligned}
\sum_s\da\s \eft\s\efdep\s &=0 \\
\sum_s\pa\s \lam\s &=0 \\
\end{aligned}\right.
\end{equation}
Equations (\ref{eq:efddeq},\ref{eq:efintadmiss}) form the discrete substructured system, which is equivalent to the global problem \eqref{eq:globalFE}.

Note that in the case when Subdomain $s$ has not enough Dirichlet boundary conditions then the  reaction has to balance the subdomain with respect to rigid body motions. Let $\kerK\s$ be a basis of $\operatorname{ker}(\stiff\s)$, we have:
\begin{equation}\label{eq:ortho_modes_rigides}
{\kerK\s}^T(\force\s + {\eft\s}^T \lam\s) =0
\end{equation}

\subsubsection{Domain decomposition solvers}
Classical BDD and FETI solvers are well described in many papers (for instance, see~\cite{Gos06} and the associated bibliography). 
Very briefly, BDD and FETI are Krylov iterative solvers for the reformulation of the problem in term interface quantities: in BDD, the continuous interface displacement which nullifies the interface lack of balance is sought, whereas in FETI the balanced nodal reaction field which nullifies the displacement gap is searched for.
We emphasize on the fields which are created along the algorithms and which are useful for error estimation.

We name two important parallel procedures which are used in these methods and which correspond to solving local problems with Dirichlet conditions on the interface (subscript D) and local problems with Neumann conditions on the interface (subscript N):\smallskip

\vline
\begin{minipage}[t]{.45\textwidth}
$(\lam\s_D,\efdep_D\s)=\mathtt{Solve}_D(\efdep\s_b,\force\s)$:
\begin{equation*}
\left\{ \begin{aligned}
&\stiff\s \efdep_D\s = \force\s + {\eft\s}^T \lam_D\s \\
&\eft\s\efdep_D\s=\efdep\s_b
\end{aligned}
\right.
\end{equation*}
\end{minipage}\hfill\vline
\begin{minipage}[t]{.45\textwidth}
$(\efdep_N\s)=\mathtt{Solve}_N(\lam_N\s,\force\s)$
\begin{equation*}
\left\{ \begin{aligned}
&\stiff\s \efdep_N\s = \force\s + {\eft\s}^T \lam_N\s\\
&\text{where }(\lam_N\s)_s \text{ satisfy Eq.~} \eqref{eq:ortho_modes_rigides}
\end{aligned}
\right.
\end{equation*}

\end{minipage}

\medskip

When developing these methods, we get:
\begin{equation*}
\begin{aligned}
&(\efdep_D\s)_i = {\stiff\s_{ii}}^{-1}\left( \force_i\s - \stiff\s_{ib} \efdep_b\s\right)\qquad\text{and}\qquad (\efdep_D\s)_b= \efdep_b\s\\
&\lam\s_D = \schur\s \efdep_b\s -\force\s_b+\stiff\s_{bi} {\stiff\s_{ii}}^{-1}\force\s_i\\
&\efdep_N\s = {\stiff\s}^+\left( \force\s + {\eft\s}^T \lam_N\s\right)
\end{aligned}
\end{equation*}
where $\schur\s=\left(\stiff\s_{bb}-\stiff\s_{bi} {\stiff\s_{ii}}^{-1}\stiff\s_{ib}\right)$ is the Schur complement matrix, and ${\stiff\s}^+$ is a pseudo-inverse of ${\stiff\s}$.
Depending on the method (BDD or FETI), the way to ensure the well-posedness of Neumann  problems varies. In general the implementation relies on initialization and projection, leading to a  preconditioned projected conjugate gradient, we then introduce the generic method $\mathtt{Initialize}$ and the projectors $\proj_1$ and $\proj_2$ in our algorithms (see~\cite{Gos06} for details).
 
Other important ingredients are the scaled assembling operators $(\tpa\s)$ and $(\tda\s)$. These matrices are any solutions to the following equations:
\begin{equation}
\sum_s \pa\s \tpa\sT = \matid \qquad \text{and}\qquad \sum_s \da\s \tda\sT = \matid 
\end{equation}
See~\cite{Rix99bis,Kla01} for classical definitions of these operators.

In algorithms \ref{alg:bdd} and \ref{alg:feti}, we emphasize the fields which are specifically rebuilt for error estimation using right-aligned C-style  comments.
\begin{algorithm2e}[th]\caption{BDD: main unknown $\efDep$ on the interface}\label{alg:bdd}
$\efDep=\mathtt{Initialize}(\force\s)$ \;  %
$(\lam\s_D,\efdep_D\s)=\mathtt{Solve}_D(\pa\sT\efDep,\force\s)$\;
Compute residual $\res=\sum_s\pa\s\lam\s_D$\;
Define local traction $\tlam\s=\tpa\sT\res$ \tcp*[r]{$\lam_N\s=\lam\s_D-\tlam\s$} 
$\tdep\s=\mathtt{Solve}_N(\tlam\s,0)$ \tcp*[r]{$\efdep_N\s=\efdep_D\s-{\tdep}\s $}
Preconditioned residual $\bz=\sum_s\tpa\s\tdep\s$ \;
Search direction $\bw= \proj_1 \bz$\;
\While{$\sqrt{\res^T\bz}>\epsilon$}{%
  $(\plam\s_D,\pdep_D\s)=\mathtt{Solve}_D(\pa\sT \bw,0)$\;
  $\bp=\sum_s\pa\s\plam\s_D$\;
  $\alpha=(\res^T\bz)/(\bp^T\bw)$\;
  $\efDep\leftarrow \efDep+\alpha \bw$  \tcp*[r]{$\begin{aligned}&\efdep_D\s\leftarrow \efdep_D\s+\alpha \pdep_D\s \\&\lam_D\s\leftarrow \lam_D\s+\alpha \plam_D\s\end{aligned} $}
  $\res \leftarrow \res-\alpha \bp$\;
  $\tlam\s=\tpa\sT\res$ \tcp*[r]{$\lam_N\s=\lam\s_D-\tlam\s$}
  $\tdep\s=\mathtt{Solve}_N(\tlam\s,0)$ \tcp*[r]{$\efdep_N\s=\efdep_D\s-{\tdep}\s $}
  $\bz = \sum_s\tpa\s\tdep_b\s$\;
  $\bw \leftarrow \proj_1 \bz - (\bp^T\proj_1\bz)/(\bp^T\bw) \bw $
}%
\end{algorithm2e}
\begin{algorithm2e}[th]\caption{FETI: main unknown $\Lam$}\label{alg:feti}
$\Lam=\mathtt{Initialize}(\force\s)$ \;  %
Local reactions $\lam_N\s=\da\sT\Lam$\;
$(\efdep_N\s)=\mathtt{Solve}_N(\lam_N\s,\force\s)$\;
Compute residual $\res=\proj_2^T(\sum_s\da\s\eft\s\efdep\s_N)$\;
Define local displacement $\tdep\s_b=\tda\sT\res$; \;
$(\tlam\s,\tdep\s)=\mathtt{Solve}_D(\tdep_b\s,0)$  \tcp*[r]{$\begin{aligned}&\efdep_D\s=\efdep_N\s-{\tdep}\s \\&\lam_D\s=\lam_N\s-\tlam\s\end{aligned}$}
Preconditioned residual $\bz=\proj_2(\sum_s\tda\s\tlam\s)$ \;
Search direction $\bw=\bz$\;
\While{$\sqrt{\res^T\bz}>\epsilon$}{%
  $(\pdep_N\s)=\mathtt{Solve}_N(\da\sT \bw,0)$\;
  $\bp=\proj_2^T(\sum_s\da\s\eft\s\pdep\s_N)$\;
  $\alpha=(\res^T\bz)/(\bp^T\bw)$\;
  $\Lam\leftarrow \Lam+\alpha \bw$  \tcp*[r]{$\begin{aligned}&\efdep_N\s\leftarrow \efdep_N\s+\alpha \pdep_N\s\\&\lam_N\s=\da\sT\Lam\end{aligned}$}
  $\res \leftarrow \res-\alpha \bp$\;
  $\tdep_b\s=\tda\sT\res$\;
  $(\tlam\s,\tdep\s)=\mathtt{Solve}_D(\tdep_b\s,0)$ \tcp*[r]{$\begin{aligned}&\efdep_D\s=\efdep_N\s-{\tdep}\s\\& \lam_D\s=\lam_N\s-\tlam\s \end{aligned}$}
  $\bz = \proj_2(\sum_s\tda\s\tlam\s)$\;
  $\bw \leftarrow \bz - (\bp^T\bz)/(\bp^T\bw) \bw $
}%
\end{algorithm2e}

\subsubsection{A posteriori error estimator}\label{ssec:apee}

In order to apply formula \eqref{eq:erdc2}, \cite{Par10} proposed a parallel procedure to build admissible fields. Indeed  the BDD and FETI solvers provide at every iteration the following vectors :
\begin{itemize}
\item[$(\efdep_D\s)_s$:] displacement vectors which are continuous at the interface so that the field $(\dep_D\s)_s=(\shapev\s\efdep_D\s)_s\in \KA(\domain)$, $(\lam_D\s)$ are the nodal reaction associated to that Dirichlet condition, they are not balanced before convergence.
\item[$(\efdep_N\s)_s$:] displacement vectors associated to nodal Reactions $(\lam_N\s)_s$ which are balanced at the interface. Displacement field $(\dep_N\s)_s=
(\shapev\s\efdep_N\s)_s \in \broken$ (hence there is no continuity across interfaces) and the associated stress field $\sig_N\s = \hooke:\strain{\shapev\s\efdep_N\s}$ can be employed (with additional input $\lam_N\s$) to build stress fields $ \hat{\sig}_N\s$ which are statically admissible $(\hat{\sig}_N\s)_s\in\SA(\Omega)$.
\end{itemize}

The following estimator is then fully parallel:
\begin{equation}\label{eq:estimgus}
 \vvvert  \dep_{ex} -\dep_D \vvvert_\domain^2 = \sum_s \vvvert  \dep_{ex}\s -\dep_D\s \vvvert_{\domain\s}^2\leqslant \sum_s\ecrc{\dep_D\s,\hat{\sig}_N\s}{\Omega\s}
\end{equation}

\section{New error bound}
\label{sec:theoreme}

Our objective is to separate the contributions to the global error of the discretization error and of the algebraic error, so that a new stopping criterion for the iterative solver can be defined, avoiding useless iterations that cannot reduce the global error (see Figure~\ref{fig:courbe_Lb}).\medskip

First, we exhibit a new bound which, compared to~\eqref{eq:estimgus}, involves   $\un{u}_N$ instead of $\un{u}_D$, this upper bound of $ \vvvert  \un{u}_{ex}-\un{u}_N  \vvvert _{\Omega} $  distinguishes the algebraic error due to the use of the DD iterative solver from the discretization error due to the finite element approximation.

The fundamental result is the following theorem:
\begin{thm}\label{thm:separation}
Using the notation of algorithms~\ref{alg:bdd} or~\ref{alg:feti}, we have
\begin{equation}\label{eq:erralg}
 \vvvert  \un{u}_{ex}-\un{u}_N  \vvvert _{\Omega} \leq  \sqrt{\res^T\bz} + \sqrt{\sum_s\ecrc{\dep_N\s,\hat{\sig}_N\s}{\Omega\s}}
\end{equation}
\end{thm}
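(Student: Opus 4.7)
The difficulty is that $\dep_N$ is broken and not globally in $\KA(\domain)$, so the Prager--Synge identity~\eqref{eq:erdc} cannot be applied directly to $(\dep_N,\hat{\sig}_N)$. My plan is to derive a broken-space variant of Prager--Synge that carries an explicit interface cross-term, to control that cross-term by Cauchy--Schwarz against the algebraic quantity $\sqrt{\res^T\bz}$, and to absorb it via completion of the square so as to recover the sharp separation announced in the theorem.

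For each subdomain, I write $\hat{\sig}_N\s-\hooke:\strain{\dep_N\s}=(\hat{\sig}_N\s-\sig_{ex})+\hooke:\strain{\dep_{ex}-\dep_N\s}$, square in the $\hooke^{-1}$ inner product, and sum over $s$ to obtain
\begin{equation*}
\sum_s\ecrc{\dep_N\s,\hat{\sig}_N\s}{\Omega\s}=\enernorm{\sig_{ex}-\hat{\sig}_N}{\domain}^2+\vvvert\dep_{ex}-\dep_N\vvvert_\domain^2+2\mathcal J,
\end{equation*}
where $\mathcal J=\sum_s\int_{\Omega\s}(\hat{\sig}_N-\sig_{ex}):\strain{\dep_{ex}-\dep_N\s}\,d\domain$. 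Unlike the single-subdomain case, $\mathcal J$ does not vanish a priori because $\dep_N$ jumps across interfaces. Splitting $\dep_{ex}-\dep_N\s=(\dep_{ex}-\dep_D)+(\dep_D\s-\dep_N\s)$, the $(\dep_{ex}-\dep_D)$-contribution drops out by the classical Prager--Synge orthogonality: $\dep_{ex}-\dep_D\in\KAo(\domain)$ since both fields lie in $\KA(\domain)$, and $\hat{\sig}_N-\sig_{ex}$ satisfies homogeneous static admissibility on $\domain$ since both stresses lie in $\SA(\domain)$ with the same loading. Hence $\mathcal J=\sum_s\int_{\Omega\s}(\hat{\sig}_N-\sig_{ex}):\strain{\dep_D\s-\dep_N\s}\,d\domain$, and Cauchy--Schwarz in the $(\hooke^{-1},\hooke)$ duality, applied subdomain by subdomain and then summed, yields
\begin{equation*}
|\mathcal J|\leq\enernorm{\sig_{ex}-\hat{\sig}_N}{\domain}\;\sqrt{\sum_s\vvvert\dep_D\s-\dep_N\s\vvvert_{\Omega\s}^2}.
\end{equation*}

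The algebraic residual $\res^T\bz$ emerges upon inspecting the equations satisfied by $\dep_D\s-\dep_N\s$ in Algorithms~\ref{alg:bdd} and~\ref{alg:feti}: in both methods the $\mathtt{Solve}_N$ (resp.\ $\mathtt{Solve}_D$) substep with zero right-hand side gives $\stiff\s(\dep_D\s-\dep_N\s)=\eft\sT\tlam\s$ with $\tlam\s$ and $\bz$ read from the preconditioned residual, so a direct computation produces $\sum_s\vvvert\dep_D\s-\dep_N\s\vvvert_{\Omega\s}^2=\res^T\bz$. Writing $s_N:=\enernorm{\sig_{ex}-\hat{\sig}_N}{\domain}$ and substituting the bound on $|\mathcal J|$ into the broken identity,
\begin{equation*}
\vvvert\dep_{ex}-\dep_N\vvvert_\domain^2\leq\sum_s\ecrc{\dep_N\s,\hat{\sig}_N\s}{\Omega\s}-s_N^2+2s_N\sqrt{\res^T\bz}\leq\sum_s\ecrc{\dep_N\s,\hat{\sig}_N\s}{\Omega\s}+\res^T\bz,
\end{equation*}
the last step using $2s_N\sqrt{\res^T\bz}\leq s_N^2+\res^T\bz$. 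Taking square roots and applying $\sqrt{a+b}\leq\sqrt a+\sqrt b$ produces~\eqref{eq:erralg}. The main obstacle is the non-vanishing cross-term $\mathcal J$: Cauchy--Schwarz unavoidably brings in the unknown stress error $s_N$, but the $-s_N^2$ already present on the right-hand side of the broken identity lets one complete the square, cancelling $s_N$ and leaving exactly the two contributions claimed.
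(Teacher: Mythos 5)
Your proof is correct, but it takes a genuinely different route from the paper's. The paper is modular: it first invokes a broken-norm inequality cited from \cite{vohralik_posteriori_2007} (its Lemma~\ref{thm:separation_continue}), applied with $\depv=\dep_D$, to get $\vvvert \dep_{ex}-\dep_N \vvvert_\domain \leq \vvvert \dep_N-\dep_D \vvvert_\domain + \sqrt{\sum_s\ecrc{\dep_N\s,\hat{\sig}_N\s}{\Omega\s}}$, and then proves the \emph{exact identity} $\vvvert \dep_N-\dep_D \vvvert_{\domain}^2=\res^T\bz$ (its second lemma, Eq.~\eqref{eq:lemma2}) by the same discrete computation you sketch (stiffness-energy of the zero-right-hand-side correction, then a BDD/FETI case distinction, the FETI case using $\res=\proj_2^T\res$). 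You avoid the external citation entirely: you expand the constitutive-relation error of $(\dep_N,\hat{\sig}_N)$ subdomain by subdomain into a broken Prager--Synge identity with cross term $\mathcal{J}$, annihilate the continuous part of $\mathcal{J}$ by the $\KAo(\domain)$/$\SA(\domain)$ orthogonality, bound the remainder by Cauchy--Schwarz against the same algebraic quantity, and close with Young's inequality. What your route buys: it is self-contained, and your intermediate estimate $\vvvert \dep_{ex}-\dep_N \vvvert_\domain^2\leq \res^T\bz + \sum_s\ecrc{\dep_N\s,\hat{\sig}_N\s}{\Omega\s}$ is actually \emph{sharper} than \eqref{eq:erralg}, which you only then relax via $\sqrt{a+b}\leq\sqrt{a}+\sqrt{b}$. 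What the paper's route buys: its second lemma is an identity rather than a one-sided bound, which gives $\sqrt{\res^T\bz}$ a precise mechanical meaning (the energy distance between the Dirichlet and Neumann iterates) and is reused verbatim in the paper's closing remark to bound $\vvvert \dep_{ex}-\dep_D \vvvert_\domain$ with the factor $2$; that interpretation is hidden inside your Cauchy--Schwarz/Young step. The only thin spot in your write-up is the phrase ``a direct computation produces $\sum_s\vvvert\dep_D\s-\dep_N\s\vvvert_{\Omega\s}^2=\res^T\bz$'': that computation is exactly the paper's second lemma and deserves the explicit two-case verification, in particular the projector manipulation in the FETI case; as stated it is an assertion, not a proof, though the mechanism you cite is the right one.
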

In other words, at each iteration of the solver, the distance between the displacement field $\un{u}_N$ and the exact solution is bounded by the preconditioner norm of the residual of the conjugate gradient solver plus a sum over subdomains of errors in constitutive relation which depend on the iteration and on the discretization and which are computed in parallel.

The preconditioner-norm of the residual is a purely algebraic quantity computed during the conjugate gradient iterations,  it can be used in order to tell the convergence of the algorithm. In the assessments, we will verify that the term $\sqrt{\sum_s\ecrc{\dep_N\s,\hat{\sig}_N\s}{\Omega\s}}$ is mostly driven by the discretization, so that it varies slightly during the iterations.\medskip

The proof of the theorem is based on two lemmas.
\begin{lem}\label{thm:separation_continue}
Let $\un{u}_{ex}\in\KA(\Omega)$ solve the reference problem \eqref{eq:refpb}, $(\un{u}_D\s)_s\in\KA(\Omega)$ and $(\un{u}_N\s)_s\in\broken$ be as defined in Section~\ref{ssec:apee}, then:
\begin{equation}\label{eq:lemma1}
 \vvvert  \un{u}_{ex}-\un{u}_N  \vvvert _{\Omega} \leq  \vvvert  \un{u}_N-\un{u}_D  \vvvert_{\Omega} + \sqrt{\sum_s\ecrc{\dep_N\s,\hat{\sig}_N\s}{\Omega\s}}
\end{equation}
\end{lem}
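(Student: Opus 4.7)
The plan is to expand $\vvvert\dep_{ex}-\dep_N\vvvert_\domain^2$ subdomain by subdomain as
\[
 \sum_s \int_{\domain\s}\strain{\dep_{ex}\s-\dep_N\s}:\bigl[(\sig_{ex}\s-\hat{\sig}_N\s)+(\hat{\sig}_N\s-\hooke:\strain{\dep_N\s})\bigr]\,d\domain,
\]
obtained by inserting $\pm\hat{\sig}_N\s$ inside the energy integrand $\strain{\dep_{ex}\s-\dep_N\s}:(\sig_{ex}\s-\hooke:\strain{\dep_N\s})$. The second bracket is exactly the integrand defining $\ecr{\dep_N\s,\hat{\sig}_N\s}{\domain\s}$; a Cauchy--Schwarz in $\|\cdot\|_{\hooke^{-1},\domain\s}$ followed by a discrete Cauchy--Schwarz over $s$ bounds its contribution by $A\cdot D$, where $A=\vvvert\dep_{ex}-\dep_N\vvvert_\domain$ and $D=\sqrt{\sum_s\ecrc{\dep_N\s,\hat{\sig}_N\s}{\domain\s}}$.

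For the first bracket I would exploit the two global admissibility properties recalled in Section~\ref{ssec:apee}: $\dep_D\in\KA(\domain)$ (continuity of the trace across interfaces) and $\hat{\sig}_N\in\SA(\domain)$ (the interface reactions $\lam_N\s$ balance). Writing $\dep_{ex}-\dep_N=(\dep_{ex}-\dep_D)+(\dep_D-\dep_N)$, the contribution of $\dep_{ex}-\dep_D$ vanishes, because this field lies in $\KAo(\domain)$ (both $\dep_{ex}$ and $\dep_D$ match $\dep_d$ on $\partial_u\domain$) and $\sig_{ex}-\hat{\sig}_N$ is $\SA$ with zero body/surface data, hence $L^2$-orthogonal to strains of $\KAo$ fields. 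Cauchy--Schwarz then bounds the surviving term by $B\cdot C$ with $B=\vvvert\dep_D-\dep_N\vvvert_\domain$ and $C=\|\sig_{ex}-\hat{\sig}_N\|_{\hooke^{-1},\domain}$. Altogether this produces the key quadratic inequality $A^2\leq BC+AD$.

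It remains to show $C\leq B+D$. Prager--Synge \eqref{eq:erdc} applied to the globally admissible pair $(\dep_D,\hat{\sig}_N)$ gives $C\leq\ecr{\dep_D,\hat{\sig}_N}{\domain}$; subdomain-wise, the triangle inequality in $\|\cdot\|_{\hooke^{-1},\domain\s}$ yields $\ecr{\dep_D\s,\hat{\sig}_N\s}{\domain\s}\leq\ecr{\dep_N\s,\hat{\sig}_N\s}{\domain\s}+\vvvert\dep_D\s-\dep_N\s\vvvert_{\domain\s}$, and a Minkowski inequality in $\ell^2$ across subdomains gives $\ecr{\dep_D,\hat{\sig}_N}{\domain}\leq D+B$. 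Plugging $C\leq B+D$ into $A^2\leq BC+AD$ produces $A^2-AD-B(B+D)\leq 0$, whose discriminant factors as $(D+2B)^2$, so the positive root is exactly $A=B+D$, which is the claimed bound. The main obstacle is spotting the right double insertion—of $\hat{\sig}_N$ inside the energy integrand and of $\dep_D$ inside the kinematic factor—because it is this particular algebraic rearrangement that simultaneously activates the $\SA/\KAo$ orthogonality and keeps the final estimator written with the subdomain constitutive errors $\ecr{\dep_N\s,\hat{\sig}_N\s}{\domain\s}$ rather than $\ecr{\dep_D\s,\hat{\sig}_N\s}{\domain\s}$; everything else is a routine cascade of Cauchy--Schwarz, Minkowski and a quadratic-inequality manipulation.
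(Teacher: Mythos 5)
Your proof is correct, and it takes a genuinely different route from the paper's. The paper deduces the lemma almost immediately from a broken-space inequality borrowed from Vohral\'ik, its Equation~\eqref{eq:lemvohralik}, applied with $\un{u}=\dep_{ex}$, $\widetilde{\un{u}}=\dep_N$ and $\depv=\dep_D$; the only remaining work is to process the cross term $\sum_s\int_{\domain\s}\strain{\dep_{ex}\s-\dep_N\s}:\hooke:\strain{\un{\varphi}\s}\,d\domain$ exactly the way you process your second bracket: constitutive relation, replacement of $\sig_{ex}$ by $\hat{\sig}_N$ via the orthogonality of differences of $\SA(\domain)$ fields to strains of $\KAo(\domain)$ fields, and a double Cauchy--Schwarz. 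You never invoke that external lemma; instead you give a self-contained computation: the expansion of $\vvvert\dep_{ex}-\dep_N\vvvert_\domain^2$ with the two insertions ($\pm\hat{\sig}_N$ in the stress factor, $\pm\dep_D$ in the kinematic factor), the same $\SA/\KAo$ orthogonality --- used to annihilate the $(\dep_{ex}-\dep_D,\,\sig_{ex}-\hat{\sig}_N)$ term rather than to swap stresses --- giving, in your notation, $A^2\leq BC+AD$; then the Prager--Synge identity \eqref{eq:erdc} for the globally admissible pair $(\dep_D,\hat{\sig}_N)$, combined with a subdomain-wise triangle inequality and an $\ell^2$ Minkowski inequality, giving $C\leq B+D$; and finally the quadratic inequality $A^2-AD-B(B+D)\leq 0$, whose roots $-B$ and $B+D$ (discriminant $(D+2B)^2$) yield $A\leq B+D$ since $A\geq 0$. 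I checked each of these steps and they are sound; note that both routes rest on the same two structural facts recalled in Section~\ref{ssec:apee}, namely $\dep_D\in\KA(\domain)$ and $(\hat{\sig}_N\s)_s\in\SA(\domain)$, and both need the crucial feature that the constitutive-relation terms end up evaluated at $\dep_N\s$, not $\dep_D\s$. What the paper's route buys is brevity and a reusable, general-purpose broken-space lemma; what yours buys is self-containedness --- every tool you use (Prager--Synge, Cauchy--Schwarz, Minkowski) is already stated in or standard for the paper, so no external citation is needed --- at the cost of a longer chain and the less transparent quadratic manipulation at the end.
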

\begin{proof}
The proof of this lemma is based on the following result, proved in \cite{vohralik_posteriori_2007}:
 for $(\un{u},\depv) \in \KA(\Omega)\times\KA(\Omega)$ and $\widetilde{\un{u}} \in \broken$, 
\begin{equation}\label{eq:lemvohralik}
 \vvvert \un{u}-\widetilde{\un{u}} \vvvert_\Omega \leq \vvvert \depv-\widetilde{\un{u}} \vvvert_\Omega +
 \sum_s \int_{\Omega_s} \strain{\un{u}-\widetilde{\un{u}}}: \hooke :\strain{\frac{\un{u}-\depv}{\vvvert \un{u}-\depv \vvvert_\Omega}} d\Omega
 \end{equation}
We apply this result with $\un{u}=\un{u}_{ex}$ and $\widetilde{\un{u}}=(\un{u}_N\s)_s$, we note $\un{\varphi} = \frac{\un{u}_{ex}-\depv}{\vvvert \un{u}_{ex}-\depv\vvvert_\Omega}\in \KAo(\Omega)$, the second term of the expression is simplified by the introduction of 
$(\hat{\sig}_N\s)_s\in\SA(\Omega)$:
\begin{equation}
\begin{aligned}
\sum_s \int_{\Omega_s} \strain{\un{u}_{ex}\s-\un{u}_N\s} :\hooke :\strain{\un{\varphi}\s} d\Omega &=
 \sum_s \int_{\Omega_s} (\sig_{ex}\s-\sig_N\s ) :\strain{\un{\varphi}\s} d\Omega \\
& = \sum_s \int_{\Omega_s} (\hat{\sig}_N\s-\sig_N\s ) :\strain{\un{\varphi}\s} d\Omega \\
& \leqslant \sum_s \lVert \hat{\sig}_N\s-\sig_N\s \rVert_{\hooke^{-1},\Omega\s} \vvvert \un{\varphi}\s\vvvert_{\Omega\s}\\
& \leqslant \sqrt{\sum_s \lVert \hat{\sig}_N\s-\sig_N\s \rVert_{\hooke^{-1},\Omega\s}^2}\\&\qquad\qquad\qquad\qquad\times \underbrace{\sqrt{\sum_s\vvvert \un{\varphi}\s\vvvert_{\Omega\s}^2}}_{=1}
\end{aligned}
\end{equation}
where we have used Cauchy-Schwarz inequality (two times). 
By definition  $\lVert \hat{\sig}_N\s-\sig_N\s \rVert_{\hooke^{-1},\Omega\s} = \ecr{\dep_N\s,\hat{\sig}_N\s}{\Omega\s}$. One just has to choose $\depv=\un{u}_D$ in the first term of  Equation~\eqref{eq:lemvohralik}.
\end{proof}

\begin{lem} Using the notation of algorithms~\ref{alg:bdd} or~\ref{alg:feti}, we have: 
\begin{equation}\label{eq:lemma2}
\lVert \un{u}_N-\un{u}_D \rVert_{\hooke,\Omega}^2 = \res^T\bz
\end{equation}
in other words, the distance between $\un{u}_N$ and $\un{u}_D$ in the energy norm is the pre\-con\-di\-tio\-ner norm of the residual.
\end{lem}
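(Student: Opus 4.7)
The plan is to identify the auxiliary field $\tdep\s$ appearing in both algorithms with the subdomain-wise difference $\efdep_D\s - \efdep_N\s$ (up to sign), and then to collapse the sum $\sum_s \tdep\sT \stiff\s \tdep\s$ down to $\res^T\bz$ via the scaled assembly operators.

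First I would check that, by the local FE equilibrium \eqref{eq:efddeq}, both $\efdep_D\s$ and $\efdep_N\s$ satisfy $\stiff\s \efdep\s = \force\s + \eft\sT\lam\s$ with interface reactions $\lam_D\s$ and $\lam_N\s$ respectively. Subtracting and using $\tlam\s = \lam_D\s - \lam_N\s$ (which is exactly the update produced by $\mathtt{Solve}_N(\tlam\s,0)$ in BDD, and by $\mathtt{Solve}_D(\tdep_b\s,0)$ in FETI), one obtains
\begin{equation*}
\stiff\s \tdep\s = \eft\sT \tlam\s, \qquad \efdep_D\s - \efdep_N\s = \tdep\s .
\end{equation*}
Because $\dep_N - \dep_D$ lies in \broken, the global energy norm splits across subdomains, so
\begin{equation*}
\|\dep_N - \dep_D\|_{\hooke,\Omega}^2 = \sum_s \tdep\sT \stiff\s \tdep\s = \sum_s \tdep\sT \eft\sT \tlam\s = \sum_s \tdep_b\sT \tlam\s .
\end{equation*}

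The final step is a bookkeeping computation with the scaled assembly operators. In the BDD case, substituting $\tlam\s = \tpa\sT\res$ gives
\begin{equation*}
\sum_s \tdep_b\sT \tlam\s = \Bigl(\sum_s \tpa\s \tdep_b\s\Bigr)^{\!T} \res = \bz^T \res .
\end{equation*}
In the FETI case, substituting $\tdep_b\s = \tda\sT\res$ gives $\sum_s \res^T \tda\s \tlam\s = \res^T(\sum_s \tda\s \tlam\s)$; since the residual $\res$ is constructed and updated entirely through applications of $\proj_2^T$, it lies in the range of $\proj_2^T$, so $\res^T \proj_2 = \res^T$ and consequently $\res^T\bz = \res^T\sum_s \tda\s\tlam\s$. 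In both cases the claimed identity follows.

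The only delicate point I anticipate is verifying that the identification $\efdep_D\s - \efdep_N\s = \tdep\s$ persists exactly through each CG update (including the pseudo-inverse action in the Neumann solves and the initialization phase), and that $\res$ is genuinely invariant under $\proj_2^T$ at every FETI iteration; once those bookkeeping facts are in hand, the remainder is a one-line manipulation using only the local FE equation and the duality between $\pa\s/\tpa\s$ (resp.\ $\da\s/\tda\s$).
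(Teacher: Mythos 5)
Your proof is correct and follows essentially the same route as the paper: both reduce the broken energy norm to the interface product $\sum_s\tlam\sT\tdep_b\s$ (the paper via Stokes' theorem and the definition of the nodal reactions, you via subtracting the discrete equilibrium equations $\stiff\s\efdep_D\s=\force\s+\eft\sT\lam_D\s$ and $\stiff\s\efdep_N\s=\force\s+\eft\sT\lam_N\s$, which is the same computation in algebraic form), and then conclude with the identical case-by-case substitution $\tlam\s=\tpa\sT\res$ for BDD and $\tdep_b\s=\tda\sT\res$ together with $\proj_2^T\res=\res$ for FETI. Your handling of the FETI sign flip ($\efdep_D\s-\efdep_N\s=-\tdep\s$, $\lam_D\s-\lam_N\s=-\tlam\s$, the two signs cancelling in the product) and your justification that $\res$ stays in the range of $\proj_2^T$ are both sound, the latter being slightly more explicit than the paper's bare assertion $\res=\proj_2^T\res$.
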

\begin{proof}
Let us transform the expression of the first term using Stokes' theorem and definition~\eqref{eq:lambdad}:
\begin{equation}
\begin{aligned}
\lVert \un{u}_N-\un{u}_D \rVert_{\hooke,\Omega}^2 &= \sum_s \int_{\Omega\s} \strain{\un{u}_D\s-\un{u}_N\s} : \hooke: \strain{\un{u}_D\s-\un{u}_N\s} d\Omega\\
&= 
 \sum_s \left(\lam_D\s-\lam_N\s\right)^T{\eft\s} \left(\efdep_D\s-\efdep_N\s\right)\\
 &=\sum_s\tlam\sT\tdep_b\s
\end{aligned}
\end{equation}
We now need to make particular cases depending on the algorithm. 
 In the  BDD case, we have:
\begin{equation}
\begin{aligned}
 \sum_s\tlam\sT\tdep_b\s = \sum_s (\tpa\sT \res)^T \tdep_b\s= \res^T \bz
\end{aligned}
\end{equation}
In the FETI case, we use the fact that $\res=\proj_2^T\res$:
\begin{equation}
\begin{aligned}
\sum_s\tlam\sT\tdep_b\s = \sum_s\tlam\sT \tda\sT \res= \sum_s\tlam\sT \tda\sT \proj_2^T\res = \res^T \bz
\end{aligned}
\end{equation}
\end{proof}

This leads to the resolution strategy of Algorithm~\ref{alg:ddstop}. The idea is to iterate until the algebraic error is negligible with respect to the estimation of the discretization error. 
\begin{algorithm2e}[th]\caption{DD solver with adapted stopping criterion}\label{alg:ddstop}
Set $\alpha>1$ and $\beta\geqslant 1$\;
Initialize, get $(\efdep_N\s,\lam_N\s,\res,\bz)$\;%
Estimate discretization error $e^2=\sum_s\ecrc{\dep_N\s,\hat{\sig}_N\s}{\Omega\s}$ \;
\While{$\sqrt{\res^T\bz}>e/\alpha$}{
\While{$\sqrt{\res^T\bz}>e/(\alpha \beta)$}{
Make Alg.~\ref{alg:bdd} or Alg.~\ref{alg:feti} iterations, get $(\efdep_N\s,\lam_N\s)$
}
Update error estimator $e^2=\sum_s\ecrc{\dep_N\s,\hat{\sig}_N\s}{\Omega\s}$ \;
}
\end{algorithm2e}
The objective being to have the norm of the residual $\alpha$ times smaller than the estimation of the discretization error ($\alpha=10$ is a typical value), the coefficient $\beta$ (typically 2) takes into account the small variation of the estimate $e$ of the discretization error along iterations.\\


\noindent
{\bf Remark:}  
As engineers often expect the displacement field to be continuous, we can derive a bound for the error associated with the continuous field $\un{u}_D$ using the triangular inequality and \eqref{eq:erralg}-\eqref{eq:lemma2}:
\begin{equation}
 \vvvert  \un{u}_{ex}-\un{u}_D  \vvvert _{\Omega} \leq  2 \, \sqrt{\res^T\bz} + \sqrt{\sum_s\ecrc{\dep_N\s,\hat{\sig}_N\s}{\Omega\s}}
\end{equation}

\section{Numerical assessment}
\label{sec:assessment}
In this section, we present the new upper bound obtained by applying  Theorem~\ref{thm:separation_continue} for two mechanical problems.The FE computation and the construction of statically admissible stress fields are performed using an Octave code. The material is chosen to be isotropic, homogeneous, linear and elastic with Young's modulus $E= 1\,$Pa and Poisson's ratio $\nu=0.3$. 

For each case, we computed the error estimation for the sequential problem and for the substructured problem (as it is proposed in \cite{Par10}) and the two sources of error provided by the application of Theorem~\ref{thm:separation_continue}.

\subsection{Rectangular domain with known solution}

First, let us consider a rectangular structure $\Omega=[0;8]\times[0;1]$. Homogeneous Dirichlet boundary conditions are considered on all the boundary and the domain is subjected to a polynomial body force such that the exact solution is known : 
\begin{equation*}
\underline{u}_{ex} =x(x-8l)y(y-l)^3 \underline{e}_x + x y^2 (x-8) (y-1)\underline{e}_y
\end{equation*}

Therefore, we are able to compute the true errors $\vvvert  \un{u}_{ex}-\un{u}_N  \vvvert _{\Omega}$ and $\vvvert  \un{u}_{ex}-\un{u}_D  \vvvert _{\Omega}$.

The domain is meshed with triangles and divided in 8 identical squares. We use the FETI algorithm to solve the substructured problem with a Dirichlet preconditionner and the construction of statically admissible fields is performed using the Element Equilibration Technique (EET) \cite{Lad91, Rey2013}. For the record, the EET is two-step procedure: first, balanced traction fields are postprocessed on the edges of the elements, then, independent Neumann problems are solved on the elements with high precision (here each triangle is subdivided into $16$ elements).

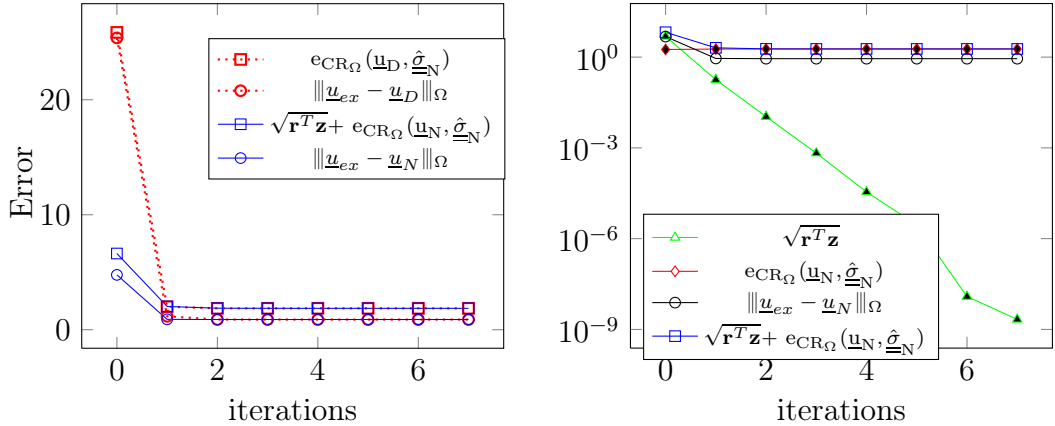
\begin{figure}[ht]
\begin{minipage}{.49\textwidth}
\begin{tikzpicture}
\begin{axis}[
width = 0.99\textwidth,
xlabel=iterations,
legend style={at={(0.3,0.9)}, anchor=north west}] 
\addplot [draw=red,thick,dotted, mark=square,mark options=solid] table[x=iterations,y=apf] {poutre_pol_bornes_EET.txt};
\addlegendentry{ {\scriptsize \ecr{{\un{u}}_D, \hat{\sig}_N}{\Omega}}}
\addplot [draw=red, thick,dotted,  mark=o,mark options=solid] table[x=iterations,y=evraie_uD] {poutre_pol_bornes_EET.txt};
\addlegendentry{{\scriptsize $\vvvert  \un{u}_{ex}-\un{u}_D  \vvvert _{\Omega}$ }}
\addplot [draw=blue, mark=square] table[x=iterations,y=somme] {poutre_pol_separation.txt};
\addlegendentry{{\scriptsize  $\sqrt{\res^T\bz}+$ \ecr{{\un{u}}_N, \hat{\sig}_N}{\Omega}}}
\addplot [draw=blue, mark=o] table[x=iterations,y=evraie_uN] {poutre_pol_separation.txt};
\addlegendentry{{\scriptsize $\vvvert  \un{u}_{ex}-\un{u}_N  \vvvert _{\Omega} $}}

\end{axis}
\draw (-0.8,2.2) node[scale=1.,rotate=90]{Error};
\end{tikzpicture}
\end{minipage}
\begin{minipage}{.49\textwidth}
 \begin{tikzpicture}
\begin{semilogyaxis}[
width = 0.99\textwidth,
xlabel=iterations,
legend style={at={(0.03,0.39)}, anchor=north west}] 
\addplot [draw=green, mark=triangle*] table[x=iterations,y=res] {poutre_pol_separation.txt};
\addlegendentry{ {\scriptsize $\sqrt{\res^T\bz}$ } }
\addplot [draw=red, mark=diamond*] table[x=iterations,y=EET] {poutre_pol_separation.txt};
\addlegendentry{ {\scriptsize \ecr{{\un{u}}_N, \hat{\sig}_N}{\Omega}}}
\addplot [draw=black, mark=o] table[x=iterations,y=evraie_uN] {poutre_pol_separation.txt};
\addlegendentry{{\scriptsize $\vvvert  \un{u}_{ex}-\un{u}_N  \vvvert _{\Omega} $}}
\addplot [draw=blue, mark=square] table[x=iterations,y=somme] {poutre_pol_separation.txt};
\addlegendentry{{\scriptsize  $\sqrt{\res^T\bz}+$ \ecr{{\un{u}}_N, \hat{\sig}_N}{\Omega}}}
\end{semilogyaxis}
\end{tikzpicture}
\end{minipage}
\caption{Error estimates}\label{fig:poutre_res}
\end{figure}

Figure~\ref{fig:poutre_res} illustrates the fast convergence of the estimators and the true errors against iterations.
 The quantities $\vvvert  \un{u}_{ex}-\un{u}_N  \vvvert _{\Omega}$ and $\vvvert  \un{u}_{ex}-\un{u}_D  \vvvert _{\Omega}$ are respectively bounded by $\sqrt{\res^T\bz}+ \ecr{{\un{u}}_N, \hat{\sig}_N}{\Omega}$ and $\ecr{{\un{u}}_D, \hat{\sig}_N}{\Omega}$. Obviously, when the solver has converged, the true errors $\vvvert  \un{u}_{ex}-\un{u}_N  \vvvert _{\Omega}$ and $\vvvert  \un{u}_{ex}-\un{u}_D  \vvvert _{\Omega}$ are equal and the estimates are the same.

 We observe that the algebraic error $\sqrt{\res^T\bz}$ regularly decreases to $10^{-9}$ along the iterations while the discretization error $\ecr{{\un{u}}_N, \hat{\sig}_N}{\Omega}$ is almost constant.

\subsection{Cracked structure}

Let us consider the structure of Figure~\ref{fig:Diez_pb} used in \cite{Par06}. We impose homogeneous Dirichlet boundary conditions in the bigger hole and on the base. The smaller hole is subjected to a constant unit pressure $p_0$. A unit traction force $\un{g}$ is applied normally to the surface on the left upper part. The other remaining boundaries are traction-free. A crack is also initiated from the smaller hole.
We mesh the structure with regular triangular linear elements and create 16 subdomains as shown in Figure~\ref{fig:Diez_decoupe}. We use the FETI algorithm to solve the substructured problem with a Dirichlet preconditioner. The construction of statically admissible fields is performed using the Element Equilibration Technique.

\begin{figure}[ht]
\begin{minipage}{.49\textwidth}
\centering
\includegraphics[width=.49\textwidth]{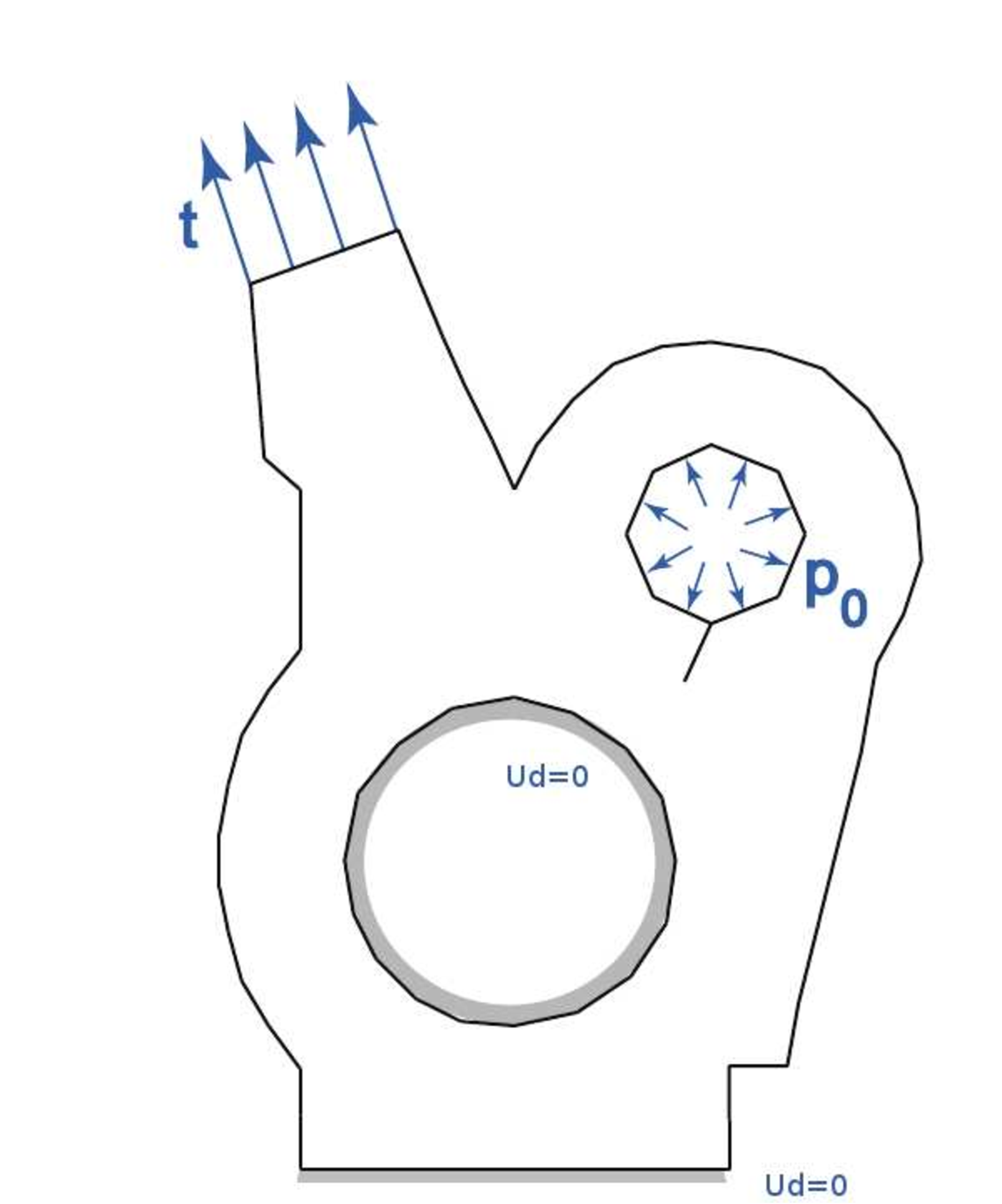}\caption{{\footnotesize Cracked structure model problem}}\label{fig:Diez_pb}
\end{minipage}
\begin{minipage}{.49\textwidth}
\centering
\includegraphics[width=.4\textwidth]{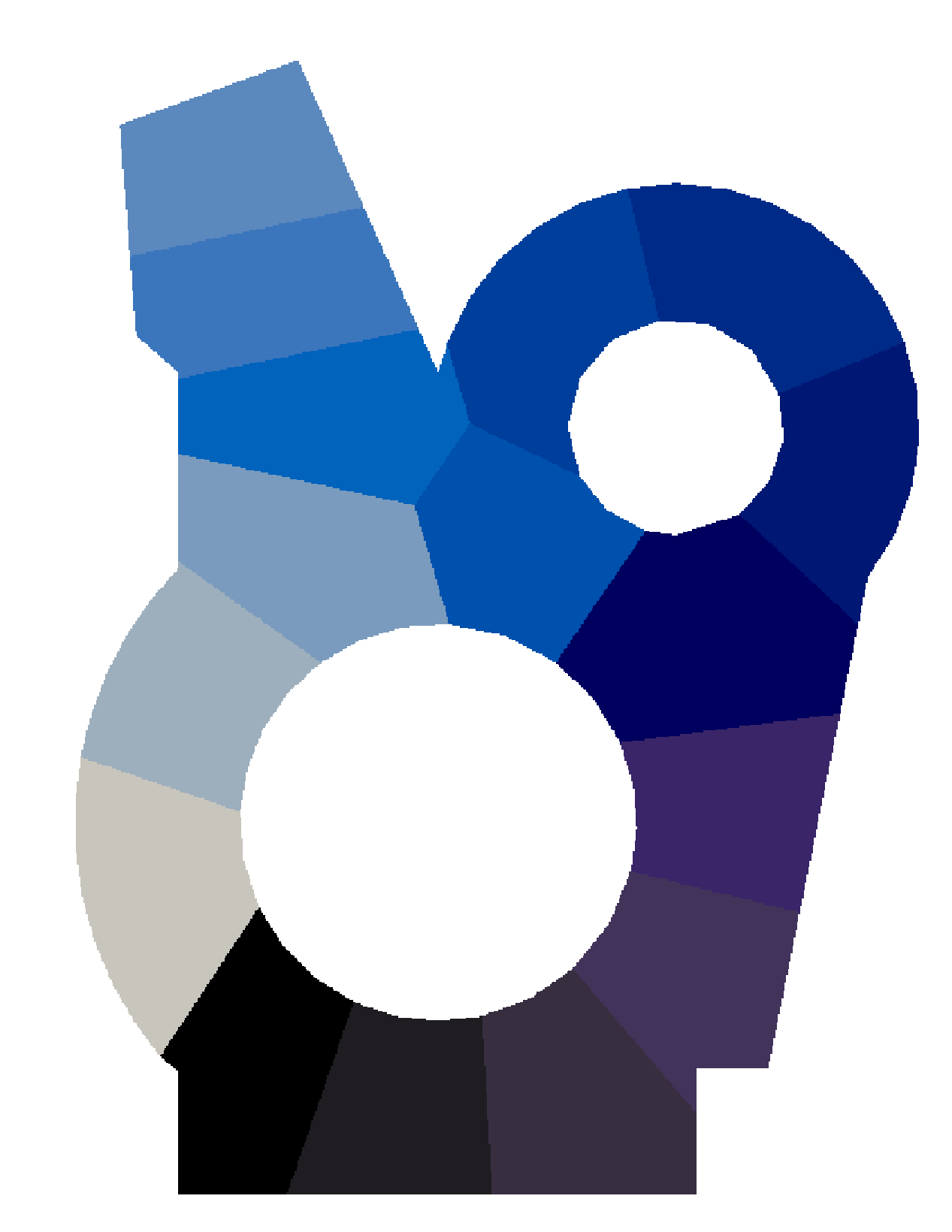}\caption{{\footnotesize Decomposition in 16 subdomains}}\label{fig:Diez_decoupe}
\end{minipage}
\end{figure}

\begin{figure}[ht]\label{fig:Diez_res}
\begin{minipage}{.49\textwidth}
\begin{tikzpicture}
\begin{semilogyaxis}[
width = 0.99\textwidth,
xlabel=iterations,
legend style={at={(0.3,.9)}, anchor=north west}] 
\addplot [draw=red, mark=diamond] table[x=iterations,y=apf] {diez_bornes_EET.txt};
\addlegendentry{ {\scriptsize \ecr{{\un{u}}_D, \hat{\sig}_N}{\Omega}}}
\addplot [draw=blue, mark=square] table[x=iterations,y=separe] {diez_bornes_EET.txt};
\addlegendentry{{\scriptsize  $\sqrt{\res^T\bz}+$ \ecr{{\un{u}}_N, \hat{\sig}_N}{\Omega}}}
\addplot [draw=black] table[x=iterations,y=apf_seq] {diez_bornes_EET.txt};
\addlegendentry{{\scriptsize  \ecr{{\un{u}}, \hat{\sig}}{\Omega} sequential}}
\end{semilogyaxis}
\end{tikzpicture}
\end{minipage}
\begin{minipage}{.49\textwidth}
\begin{tikzpicture}
\begin{semilogyaxis}[
width = 0.99\textwidth,
xlabel=iterations,
legend style={at={(0.06,0.35)}, anchor=north west}] 
\addplot [draw=green, mark=triangle*] table[x=iterations,y=res] {diez_separation.txt};
\addlegendentry{ {\scriptsize $\sqrt{\res^T\bz}$ } }
\addplot [draw=red, mark=diamond] table[x=iterations,y=EET] {diez_separation.txt};
\addlegendentry{{\scriptsize \ecr{{\un{u}}_N, \hat{\sig}_N}{\Omega}}}
\addplot [draw=blue, mark=square] table[x=iterations,y=somme] {diez_separation.txt};
\addlegendentry{{\scriptsize  $\sqrt{\res^T\bz}+$ \ecr{{\un{u}}_N, \hat{\sig}_N}{\Omega}}}
\end{semilogyaxis}
\end{tikzpicture}
\end{minipage}
\caption{Error estimates}\label{fig:diez_EET}
\end{figure}

As in \cite{Par10}, we observe on Figure~\ref{fig:diez_EET} L-shaped curves showing that the quality of the solution does not improve after the $8^{th}$ iteration where the estimators in the substructured case give results very comparable to a classical error estimation in sequential framework.

Again, the discretization error $\ecr{{\un{u}}_N, \hat{\sig}_N}{\Omega}$  does not change after the first iteration 
and the residual $\sqrt{\res^T\bz}$ becomes negligible with respect to the discretization error estimate much faster than the classical stopping criterion tells.

Figure~\ref{fig:map_error} enables us to verify that the map of element contribution to the error estimator is converged when the algebraic error is negligible  with respect to the discretization error (10 times smaller): the maps represent the difference between the final element contributions (iteration 20) and the current element contributions at iterations 1 (initialization) and 7 (when algebraic error is negligible).

\begin{figure}[ht]
\centering
\subfigure[ {\scriptsize $\frac{\left| \ecr{{\un{u}}_N, \hat{\sig}_N}{E}_{it=1}- \ecr{{\un{u}}_N, \hat{\sig}_N}{E}_{it=20} \right|}{\ecr{{\un{u}}_N, \hat{\sig}_N}{\Omega}_{it=20}}$}]{ \centering
   \includegraphics[width=0.3\textwidth] {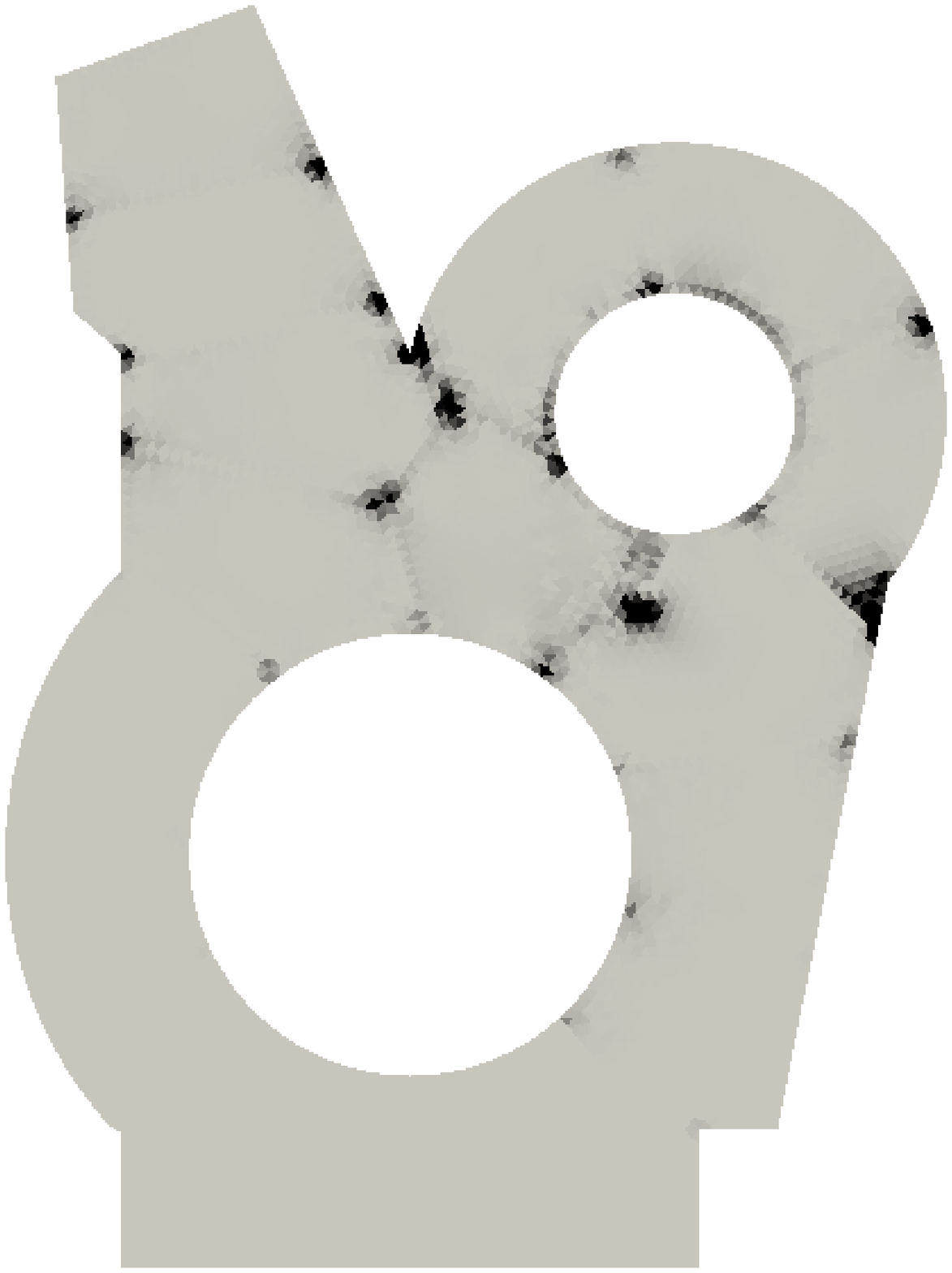}
   \label{fig:subfig1} }
 \subfigure[ {\scriptsize $\frac{\left| \ecr{{\un{u}}_N, \hat{\sig}_N}{E}_{it=7}- \ecr{{\un{u}}_N, \hat{\sig}_N}{E}_{it=20} \right|}{\ecr{{\un{u}}_N, \hat{\sig}_N}{\Omega}_{it=20}}$}]{\centering
   \includegraphics[width=0.3\textwidth] {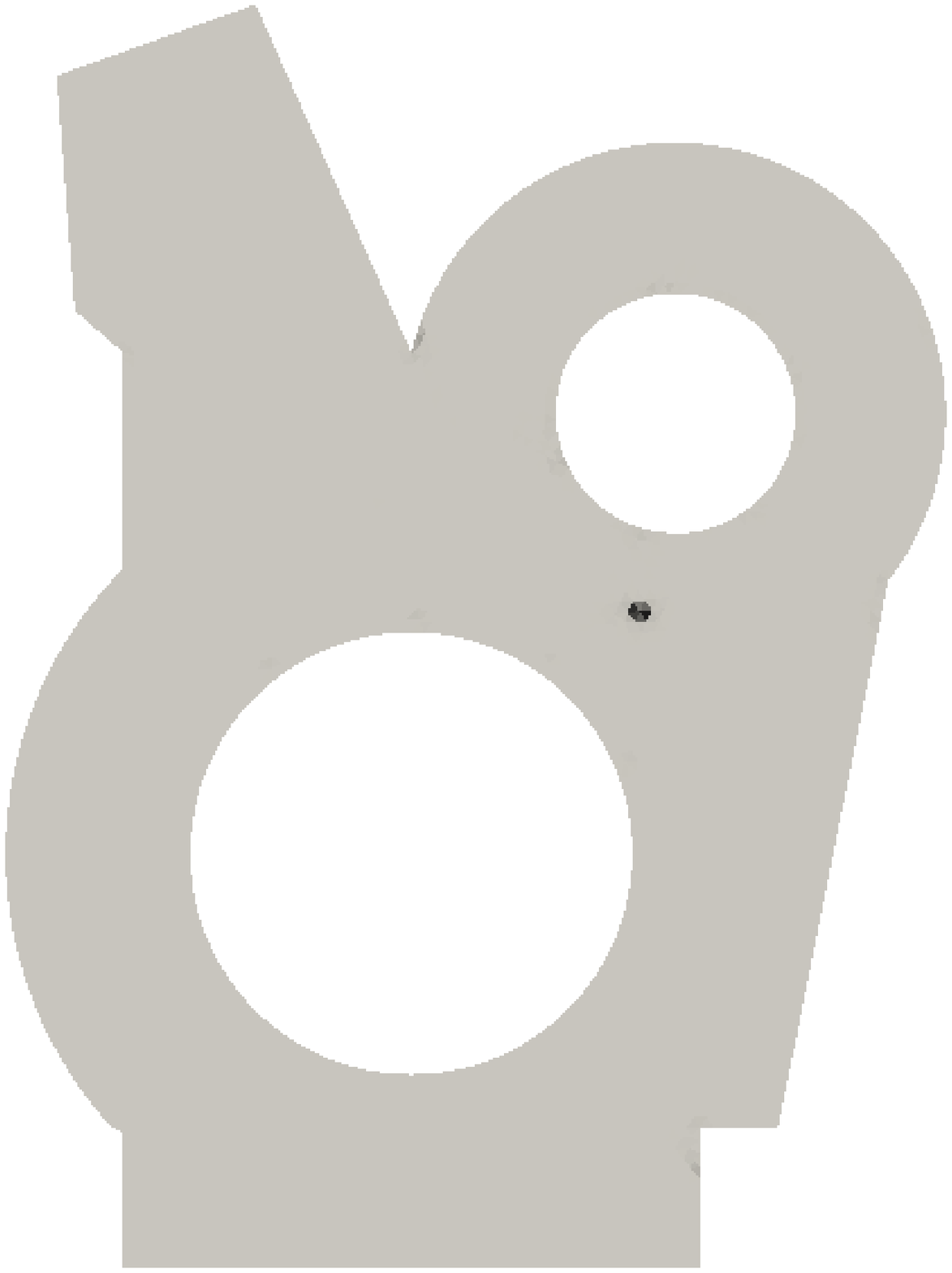}
   \label{fig:subfig2}}
 \subfigure[ ]{\centering
   \includegraphics[width=0.1\textwidth] {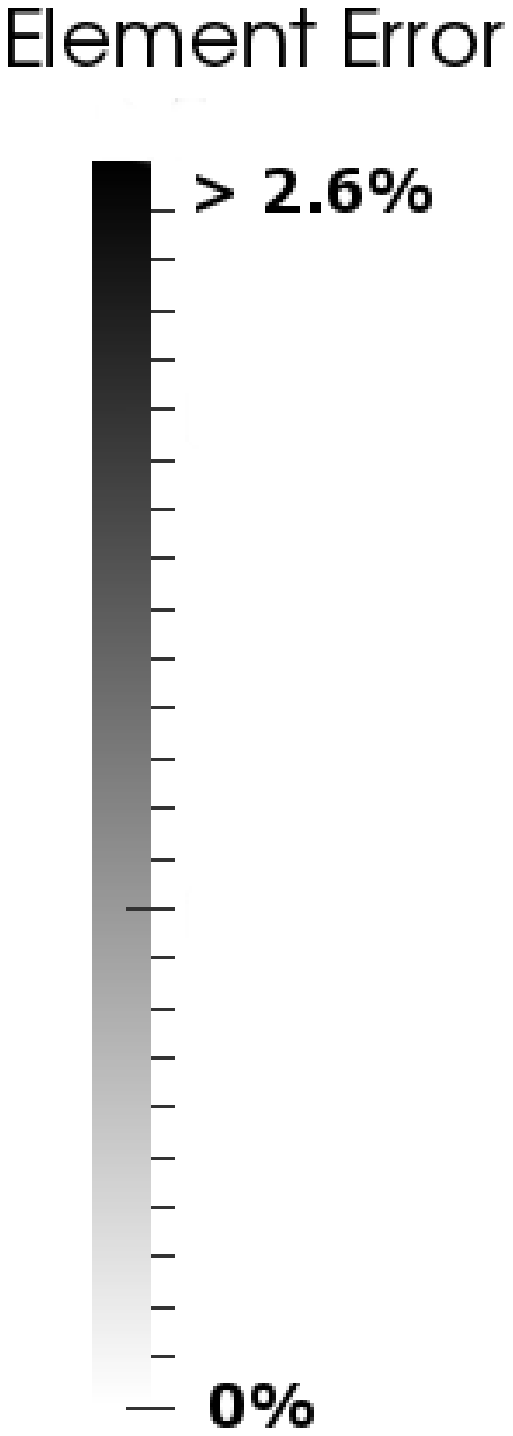}
   \label{fig:subfig3}}
\caption{Convergence of the element contribution to the estimator}\label{fig:map_error}
\end{figure}

\section{Conclusion}
\label{sec:conclusion}
This paper introduces a new guaranteed upper bound of the error in the framework of non overlapping decomposition method. This upper bound is the sum of two terms: one term is exactly the algebraic error (error due to the use of an iterative solver) and the second term is mostly due to the discretization error.

From a practical point of view, the evaluation of the algebraic error is trivial. We just compute the norm associated to the preconditioner of FETI or BDD. The evaluation of the second term of the inequality relies on the capacity to build a statically admissible stress field from the displacement fields resulting from Neumann problems per subdomains. The examples show that this quantity does not evolve much after the first iteration and then represents the discretization error. 

This separation offers the possibility to define a new stopping criterion for the iterative solver based on the non-improvement of the global quality of the approximation.
If a better quality of the solution is required, the error estimator provides an error card that can guide the remeshing operation. 

\bibliography{Biblio}

\end{document}